\theoremstyle{definition}
\newtheorem{defn}{Definition}[section]
\newtheorem{thm}[defn]{Theorem}
\newtheorem{tvr}[defn]{Proposition}
\theoremstyle{remark}
\newtheorem{example}{Example}[section]
\newcommand{\ii}{\mathrm{i}}
\newcommand{\map}{\rightarrow}
\renewcommand{\int}{\operatorname{int}}
\newcommand{\q}{\quad}
\renewcommand{\epsilon}{\varepsilon}
\newcommand{\ep}{\varepsilon}
\newcommand{\la}{\lambda}
\newcommand{\al}{\alpha}
\newcommand{\om}{\omega}
\renewcommand{\rho}{\varrho}
\renewcommand{\phi}{\varphi}
\newcommand{\R}{{\mathbb{R}}}
\newcommand{\N}{{\mathbb N}}
\newcommand{\Com}{{\mathbb C}}
\newcommand{\Z}{\mathbb{Z}}
\newcommand{\set}[2]{\left\{#1 \, |\, #2 \right\}}
\newcommand{\setb}[2]{\left\{#1 \, \mid\, #2 \right\}}
\newcommand{\abs}[1]{\left\vert#1\right\vert}
\newcommand{\wt}{\widetilde}
\newcommand{\sca}[2]{\langle #1,\, #2\rangle}
\begin{document}

\title[Torus $E-$discretization]
{On $E-$Discretization of Tori of Compact Simple Lie Groups}

\author{Ji\v{r}\'{\i} Hrivn\'{a}k$^{1,2}$}
\author{Ji\v{r}\'{\i} Patera$^1$}

\date{\today}
\begin{abstract}\
Three types of numerical data are provided for compact simple Lie groups $G$ of classical types and of any rank. This data is indispensable for Fourier-like expansions of multidimensional digital data into finite series of $E-$functions on the fundamental domain $F^{e}$. Firstly, we determine the number $|F^{e}_M|$ of points in $F^{e}$ from the lattice $P^{\vee}_M$, which is the refinement of the dual weight lattice $P^{\vee}$ of $G$ by a positive integer $M$. Secondly, we find the lowest set $\Lambda^{e}_M$ of the weights, specifying the maximal set of $E-$functions that are pairwise orthogonal on the point set $F^{e}_M$. Finally, we describe an efficient algorithm for finding the number of conjugate points to every point of $F^{e}_M$. Discrete $E-$transform, together with its continuous interpolation, is presented in full generality.

\end{abstract}\

\maketitle

\noindent
$^1$ Centre de recherches math\'ematiques,
         Universit\'e de Montr\'eal,
         C.~P.~6128 -- Centre ville,
         Montr\'eal, H3C\,3J7, Qu\'ebec, Canada; patera@crm.umontreal.ca\\
$^2$ Department of physics,
Faculty of nuclear sciences and physical engineering, Czech
Technical University, B\v{r}ehov\'a~7, 115 19 Prague 1, Czech
republic; jiri.hrivnak@fjfi.cvut.cz


\section{Introduction}

The $E-$discretization of this paper differs in an important way, both theoretically and practically, from the `ordinary' discretization studied in \cite{HP}. First, we point out what the two approaches have in common, then we  underline their differences.

Both discretizations must have an underlying compact simple or semisimple Lie group $G$ of any rank $n<\infty$ and of any type. The rank is equal to the number of variables involved in the process. Then we introduce new classes of multivariate special functions, the $C-$ and $S-$functions in \cite{HP}, and the $E-$functions (\ref{E}) here. The functions are orthogonal on a finite region $F$ of the $n-$dimensional real Euclidean space $\R^n$, as continuous functions as well as functions restricted to a fragment of a lattice $L\cap F\subset\R^n$. The lattice can have any density but its symmetry is imposed by the underlying Lie group. Thus in $\R^n$ there are as many different lattices and special functions orthogonal on them as there are semisimple Lie groups of rank $n$. The families of $C-$, $S-$, and $E-$functions were recognized and named in \cite{P}. They generalize the common cosine, sine, and exponential functions of one variable. Many of their properties are described in reviews \cite{KP1,KP2,KP3}.

The orbit function $C$ and $S$ are built using the finite reflection group $W$, attached to each $G$, and called the Weyl group of $G$. The $E-$functions are built using the even subgroup $W^e\subset W$ which is not a reflection group. Much less specific information is available about such groups in the literature. The $E-$functions are simpler, for the same $G$, than the orbit functions of type $C$ or $S$. More precisely, an orbit function is a sum/difference of two $E-$functions. They have no prescribed behavior at the boundary $\partial F^e$ of the region of their orthogonality $F^e$, unlike the orbit functions which are either symmetric or antisymmetric with respect to their boundary $\partial F$. Region $F^e$ is a union of two adjacent copies of region $F$. Discretizations of the functions over lattice fragments $L\cap F$ and $L\cap F^e$, particularly their orthogonality over  $L\cap F^e$, require different specific values of a number of constants required for groups $G$.

The purpose of the paper is to provide all of the information needed for the Fourier analysis of $n-$dimensional digital data in terms of their $E-$function expansions, in the context of an admissible symmetry group $G$. We suppose that $G$ is one of the classical simple Lie groups with the Lie algebras of types $A_n$, $B_n$, $C_n$, and $D_n$.

In Section~2, some standard properties of simple Lie groups and/or simple Lie algebras are recalled. The properties of group $W^e$ not generally available elsewhere are important. In Section~3, we describe the lattice grids $F^e_M\subset\R^n$, where the digital data is provided. The density of the grid is controlled by our choice of the integer $M$. For any grid $F^e_M$, there are only finitely many distinct $E-$functions that are orthogonal on the grid. Also, the lowest set of such functions is described. The functions are labeled by the grid of points $\Lambda^{e}_M$. In Section~4, the properties of the $E-$functions are described for each point of $\Lambda^{e}_M$. Discrete $E-$transforms are presented in Section~5. Concluding comments and remarks are found in Section~6.

\section{Pertinent properties of simple Lie groups and their Lie algebras}

\subsection{Definitions and notations}\

Consider the Lie algebra of the compact simple Lie group $G$ of rank $n$, with the set of simple roots $\Delta=\{\al_1,\dots,\al_n\}$, spanning the Euclidean space $\R^n$ \cite{Bour,H1,VO}.

By uniform and standard methods for $G$ of any type and rank, a number of related quantities and virtually all the properties of $G$ are determined from $\Delta$. We make use of the following:

\begin{itemize}
\item
The highest root $\xi\equiv -\al_0=m_1\al_1+\dots+m_n\al_n$. Here the coefficients $m_j$ are known positive integers, also called the marks of $G$.

\item
The Coxeter number $m=1+m_1+\dots+m_n$ of $G$.

\item
The Cartan matrix $C$
\begin{equation*}
 C_{ij}=\frac{2\sca{\al_i}{\al_j} }{\sca{\al_j}{\al_j}},\q i,j\in\{1,\dots,n\}.
\end{equation*}

\item
The order $c$ of the center of $G$,
\begin{equation}\label{Center}
 c=\det C.
\end{equation}
\item
The dual weight lattice,
\begin{equation*}
 P^{\vee}=\set{\om^{\vee}\in \R^n}{\sca{\om^{\vee}}{\al}\in\Z,\, \forall \al \in \Delta}=\Z \om_1^{\vee}+\dots +\Z \om_n^{\vee}\,.
 \end{equation*}
\item
The root lattice $Q$ of $G$,
 \begin{equation}\label{dPPP}
 Q=\set{\al\in \R^n}{\sca{\al}{\om^{\vee}}\in\Z,\,\forall\om^{\vee}\in P^{\vee}}=\Z\al_1+\dots+\Z\al_n\,.
\end{equation}

\item
The dual root lattice
\begin{equation*}
 Q^{\vee}=\Z \al_1^{\vee}+\dots +\Z \al^{\vee}_n\,,\quad\text{where}\quad
     \al^{\vee}_i=\frac{2\al_i}{\sca{\al_i}{\al_i}}\,.
\end{equation*}
\end{itemize}

\subsection{Weyl group and affine Weyl group}\

The properties of Weyl groups and affine Weyl groups can be found for example in \cite{H2,BB}.
The finite Weyl group $W$ is generated by $n$ reflections $r_\al$, $\al\in\Delta$, in $(n-1)$-dimensional `mirrors' orthogonal to simple roots intersecting at the origin:
\begin{equation*}
r_{\al_i}a\equiv r_i a=a-\frac{2\sca{a}{\al_i} }{\sca{\al_i}{\al_i}}\al_i\,,
\qquad a\in\R^n\,.
\end{equation*}

The infinite affine Weyl group $W^{\mathrm{aff}}$ is the semidirect product of the Abelian group of translations $Q^\vee$ and of the Weyl group~$W$.
\begin{equation}\label{direct}
 W^{\mathrm{aff}}= Q^\vee \rtimes W.
\end{equation}
Equivalently, $W^{\mathrm{aff}}$ is generated by reflections $r_i$ and reflection $r_0$, where
\begin{equation*}
r_0 a=r_\xi a + \frac{2\xi}{\sca{\xi}{\xi}}\,,\qquad
r_{\xi}a=a-\frac{2\sca{a}{\xi} }{\sca{\xi}{\xi}}\xi\,,\qquad a\in\R^n\,.
\end{equation*}

The fundamental region $F$ of $W^{\mathrm{aff}}$ is the convex hull of the points $\left\{ 0, \frac{\om^{\vee}_1}{m_1},\dots,\frac{\om^{\vee}_n}{m_n} \right\}$:
\begin{align}
  F &=\setb{y_1\om^{\vee}_1+\dots+y_n\om^{\vee}_n}{y_0,\dots, y_n\in \R^{\geq 0}, \, y_0+y_1 m_1+\dots+y_n m_n=1  } \label{deffun} \\
&= \setb{a\in \R^n}{\sca{a}{\al}\geq 0,\forall \al\in\Delta,\sca{a}{\xi}\leq 1  }\nonumber
\end{align}

Since $F$ is a fundamental region of $W^{\mathrm{aff}}$, we have:
\begin{enumerate}
\item For any $a\in \R^n$ there exists $a'\in F$, $w\in W$ and $q^{\vee}\in Q^{\vee}$ such that
\begin{equation}\label{fun1}
 a=wa'+q^{\vee}.
\end{equation}

\item If $a,a'\in F$ and $a'=w^{\mathrm{aff}}a$, $w^{\mathrm{aff}}\in W^{\mathrm{aff}}$ then $a=a'$, i.e. if there exist $w\in W$ and $q^{\vee}\in Q^{\vee}$ such that $a'=wa+q^{\vee}$, then
\begin{equation}\label{fun2}
 a'=a=wa+q^{\vee}.
\end{equation}

\item
Consider a point
$a=y_1\om^{\vee}_1+\dots+y_n\om^{\vee}_n\in F$, such that
$y_0+y_1 m_1+\dots+y_n m_n=1$.
The isotropy group
 \begin{equation}\label{stab}
 \mathrm{Stab}_{W^{\mathrm{aff}}}(a) = \setb{w^{\mathrm{aff}}\in W^{\mathrm{aff}}}{w^{\mathrm{aff}}a=a}
\end{equation}
of the point $a$ is trivial, $\mathrm{Stab}_{W^{\mathrm{aff}}}(a)=1$, if $a\in \mathrm{int}(F)$, where $\mathrm{int}(F)$ denotes the interior of $F$, i.e. all $y_i>0$, $i=0,\dots,n$. Otherwise the group $\mathrm{Stab}_{W^{\mathrm{aff}}}(a)$ is finite and generated by such $r_i$ for which $y_i=0$, $i=0,\dots,n$.
\end{enumerate}

\subsection{Even Weyl group and even affine Weyl group}\

Elements of the Weyl group $W$ are orthogonal linear transformations of the space $\R^n$. A subgroup of $W$ of the elements $w\in W$ with determinant $\det w=1$ is called the even Weyl group $W^e$. i.e.
\begin{equation*}
 W^e = \set{w\in W}{\det w=1}
\end{equation*}

The subgroup $W^e$ can be viewed as the kernel of the homomorphism $\det : W \ni w \mapsto \det w$.
Since $\ker \det = W^e$ and $\det (W)= \{\pm 1\}$, group $W^e$ is a normal subgroup of $W$ and
\begin{equation*}
 \abs{W}= 2\abs{W^e}.
\end{equation*}

The infinite even affine Weyl group $W^{\mathrm{aff}}_e$ is the semidirect product of the group of translations $Q^\vee$ and of the even Weyl group~$W^e$
\begin{equation}\label{directE}
 W^{\mathrm{aff}}_e= Q^\vee \rtimes W^e.
\end{equation}

We choose some fixed $j\in \{1,\dots,n\} $ and define the set $F^e$ by
\begin{equation}\label{fundE}
 F^e = F\cup r_j \int (F).
\end{equation}
Note that $F^e$ consists of two disjoint parts: the closed simplex $F$ and the open interior of the simplex $r_j \int (F)$. From this decomposition, we also obtain the formula for the volume of $F^e$:
$$\mathrm{vol} (F^e)=2\,\mathrm{vol}(F).$$
In the following proposition, we show that $F^e$ is a fundamental region of the even affine Weyl group~$W^{\mathrm{aff}}_e$.

\begin{tvr}\label{tvrfund}
For the set $F^e$, it holds that
\begin{enumerate}
\item For any $a\in \R^n$ there exists $a'\in F^e$, $w\in W^e$ and $q^{\vee}\in Q^{\vee}$, such that
\begin{equation}\label{fun1E}
 a=wa'+q^{\vee}.
\end{equation}

\item If $a,a'\in F^e$ and $a'=w^{\mathrm{aff}}a$, $w^{\mathrm{aff}}\in W^{\mathrm{aff}}_e$ then $a=a'$, i.e. if there exist $w\in W^e$ and $q^{\vee}\in Q^{\vee}$ such that $a'=wa+q^{\vee}$, then
\begin{equation}\label{fun2E}
 a'=a=wa+q^{\vee}.
\end{equation}

\item
Consider a point $a\in F^e$. If $a\in \int(F)$ or $a\in r_j \int(F)$, then the
isotropy group
 \begin{equation}\label{stabE1}
 \mathrm{Stab}_{W^{\mathrm{aff}}_e}(a) = \setb{w^{\mathrm{aff}}\in W^{\mathrm{aff}}_e}{w^{\mathrm{aff}}a=a}
\end{equation}
is trivial, $\mathrm{Stab}_{W^{\mathrm{aff}}_e}(a)=1$. If $a\in F \setminus \int(F) $, then it holds that
 \begin{equation}\label{stabE2}
 \abs{\mathrm{Stab}_{W^{\mathrm{aff}}}(a)} = 2\abs{\mathrm{Stab}_{W^{\mathrm{aff}}_e}(a)}.
\end{equation}
\end{enumerate}
\end{tvr}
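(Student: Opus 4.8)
The plan is to deduce all three parts from the known properties (\ref{fun1})--(\ref{stab}) of the pair $(F,W^{\mathrm{aff}})$, using the fact that $W^{\mathrm{aff}}_e$ sits inside $W^{\mathrm{aff}}$ with index two. The organizing device I would introduce first is the \emph{parity} homomorphism $\epsilon\colon W^{\mathrm{aff}}\to\{\pm1\}$, $\epsilon(q^\vee w)=\det w$; this is well defined because $W$ preserves $Q^\vee$, so that the projection $W^{\mathrm{aff}}=Q^\vee\rtimes W\to W$ is a homomorphism, and its kernel is exactly $W^{\mathrm{aff}}_e$. Since $r_j$ is a reflection, $\epsilon(r_j)=-1$, giving the coset decomposition $W^{\mathrm{aff}}=W^{\mathrm{aff}}_e\sqcup r_jW^{\mathrm{aff}}_e$; I will call elements even or odd according to $\epsilon$. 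The one structural fact I would record at the outset is that each generating reflection $r_0,r_1,\dots,r_n$ is odd, since its linear part is a reflection.

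For (\ref{fun1E}) I would start from (\ref{fun1}): write $a=w^{\mathrm{aff}}a_0$ with $a_0\in F$ and $w^{\mathrm{aff}}\in W^{\mathrm{aff}}$. If $w^{\mathrm{aff}}$ is even, then $a_0\in F\subseteq F^e$ already does the job. If $w^{\mathrm{aff}}$ is odd, I split on the location of $a_0$: when $a_0\in\int(F)$ I set $a'=r_ja_0\in r_j\int(F)\subseteq F^e$, so that $a=(w^{\mathrm{aff}}r_j)a'$ with $w^{\mathrm{aff}}r_j$ even; when $a_0\in F\setminus\int(F)$, property (\ref{stab}) supplies a generating reflection $s\in\mathrm{Stab}_{W^{\mathrm{aff}}}(a_0)$, which is odd and fixes $a_0$, so $a=(w^{\mathrm{aff}}s)a_0$ with $w^{\mathrm{aff}}s$ even and $a_0\in F^e$.

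For (\ref{fun2E}) I would attach to each $x\in F^e$ a base point $\bar x\in F$ and a sign $s_x\in\{1,r_j\}$, namely $\bar x=x,\ s_x=1$ for $x\in F$ and $\bar x=r_jx\in\int(F),\ s_x=r_j$ for $x\in r_j\int(F)$, so that $x=s_x\bar x$. Given $a,a'\in F^e$ with $a'=w^{\mathrm{aff}}a$ and $w^{\mathrm{aff}}\in W^{\mathrm{aff}}_e$, the element $u=s_{a'}^{-1}w^{\mathrm{aff}}s_a$ satisfies $\bar a'=u\bar a$, so (\ref{fun2}) forces $\bar a=\bar a'=:b$ and hence $ub=b$. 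If $s_a=s_{a'}$ then $a=s_ab=s_{a'}b=a'$. Otherwise $\{s_a,s_{a'}\}=\{1,r_j\}$, which puts $b$ in $\int(F)$; then $\mathrm{Stab}_{W^{\mathrm{aff}}}(b)=1$ by (\ref{stab}) gives $u=1$, i.e. $w^{\mathrm{aff}}=r_j$, contradicting that $w^{\mathrm{aff}}$ is even. Hence $a=a'$.

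For (\ref{stabE1})--(\ref{stabE2}) the interior cases are immediate: if $a\in\int(F)$ then $\mathrm{Stab}_{W^{\mathrm{aff}}_e}(a)\subseteq\mathrm{Stab}_{W^{\mathrm{aff}}}(a)=1$, and if $a\in r_j\int(F)$ then conjugation by $r_j$ injects $\mathrm{Stab}_{W^{\mathrm{aff}}_e}(a)$ into $\mathrm{Stab}_{W^{\mathrm{aff}}}(r_ja)=1$. For $a\in F\setminus\int(F)$ I would restrict $\epsilon$ to $S:=\mathrm{Stab}_{W^{\mathrm{aff}}}(a)$: its kernel is precisely $\mathrm{Stab}_{W^{\mathrm{aff}}_e}(a)$, and since $a$ is a boundary point property (\ref{stab}) provides an odd reflection $r_i\in S$, so $\epsilon|_S$ is onto and $\abs{S}=2\abs{\mathrm{Stab}_{W^{\mathrm{aff}}_e}(a)}$. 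The step that really carries the argument---and the one I would be most careful about---is the observation that the \emph{boundary} stabilizers of $F$ always contain an odd reflection: it is this single fact that lets me both repair parity on $F\setminus\int(F)$ in part~(1) and pin down the exact factor of two in part~(3), while the interior triviality in (\ref{stab}) does the rest.
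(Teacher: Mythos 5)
Your proposal is correct and follows essentially the same route as the paper: reduce to the known properties of $(F,W^{\mathrm{aff}})$, repair parity via $r_j$ on $\int(F)$ and via an odd stabilizing reflection $r_i$ on $F\setminus\int(F)$, and obtain the index-two statement from the determinant homomorphism restricted to the stabilizer. The only difference is organizational — you introduce the parity homomorphism on all of $W^{\mathrm{aff}}$ up front and so avoid the paper's explicit unpacking of $r_0$ into $r_\xi$ plus a translation — which is a cosmetic streamlining, not a different argument.
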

\begin{proof}
\begin{enumerate}

\item Suppose we have some $a\in \R^n$. It follows from (\ref{fun1}) that there exist $a'\in F$, $w\in W$ and $q^{\vee}\in Q^{\vee}$ such that
$a=wa'+q^{\vee}$. If $\det w =1$ then we have found $a'\in F\subset F^e$, $w\in W^e$ and $q^{\vee}\in Q^{\vee}$ that satisfy (\ref{fun1}). Suppose that $\det w =-1$ and
\begin{enumerate}
\item $a'\in \int(F)$. Since $F^e \equiv F\cup r_j \int (F)$ we have $r_j a' \in F^e$. Taking into account that $\det r_j=-1$ and $r_j^2=1$ we obtain $wr_j \in W^e $ and $a=(wr_j) r_ja'+q^{\vee}$.
\item $a'\in F\setminus\int(F)$. We have from (\ref{stab}) that the stabilizer $\mathrm{Stab}_{W^{\mathrm{aff}}}(a')$ is non-trivial and contains some $r_i$, $i\in\{0,\dots,n\}$ such that $r_ia'=a'$. If $i\in\{1,\dots,n\}$ then we have $wr_i\in W^e$ and $a=(wr_i)a'+q^\vee$. If $i=0$ then we have $wr_\xi\in W^e$ and $a=(wr_\xi)a'+q'^\vee$ where $q'^\vee=2w\xi/\sca{\xi}{\xi}+q^\vee\in Q^\vee$.
\end{enumerate}

\item Suppose we have $a,a'\in F^e $ and $w\in W^e$, $q\in Q^\vee$ such that
\begin{equation}\label{funproof}
 a'=wa+q^{\vee}.
\end{equation}
Since $F^e$ consists of two disjoint parts $F$ and $r_j \int (F)$, we distinguish the following cases:
\begin{enumerate}
\item $a,a'\in F$. It follows immediately from (\ref{fun2}) that $a=a'$.
\item $a,a'\in r_j\int (F)$. Consider $b,b'\in \int(F)$ such that $a=r_j b$ and $a'=r_jb'$. Then $b'=r_jwr_j b+r_jq^\vee$ and from (\ref{fun2}) we obtain $b=b'$, i.e. $a=a'$.
\item $a'\in F$, $a\in r_j\int (F)$. Consider $a=r_jb$, $b\in\int (F)$. Then $a'=wr_jb+q^\vee $ and from (\ref{fun2}) we have that $a'=b$. Since the stabilizer of the point $b\in\int (F)$ is trivial, $\mathrm{Stab}_{W^{\mathrm{aff}}}(b)=1$, we obtain $wr_j=1$. We conclude that $\det w=-1$, which is contradictory to the assumption $w\in W^e$ in (\ref{funproof}) and thus, this case cannot occur.
\end{enumerate}
\item If $a\in \int(F)$, then from (\ref{stab}) we have that the stabilizer $\mathrm{Stab}_{W^{\mathrm{aff}}}(a)$ is trivial. Since the stabilizers of the points $a$ and $r_ja$ are conjugated, the stabilizer $\mathrm{Stab}_{W^{\mathrm{aff}}}(r_ja)$ is also trivial. Then since $\mathrm{Stab}_{W^{\mathrm{aff}}_e}(a)\subset \mathrm{Stab}_{W^{\mathrm{aff}}}(a)$, we have $\mathrm{Stab}_{W^{\mathrm{aff}}_e}(a)=\mathrm{Stab}_{W^{\mathrm{aff}}_e}(r_ja)=1$.

We have from (\ref{direct}) that for any $w^{\mathrm{aff}}\in W^{\mathrm{aff}}$ there exist a unique $w\in W$ and a unique shift $T(q^{\vee})$ such that $w^{\mathrm{aff}}=T(q^{\vee})w$. Define a homomorphism $\tau: \mathrm{Stab}_{W^{\mathrm{aff}}}(a)\map \{\pm 1\} $ for $w^{\mathrm{aff}}\in \mathrm{Stab}_{W^{\mathrm{aff}}}(a)$ by $\tau(w^{\mathrm{aff}})=\tau(T(q^{\vee})w)=\det w$.
If $a\in F \setminus \int(F)$  then we have from (\ref{stab}) that the stabilizer $\mathrm{Stab}_{W^{\mathrm{aff}}}(a)$ is non-trivial, finite and contains some $r_i$, $i\in\{0,\dots,n\}$. Since $\tau (r_i)=-1$, $i\in\{0,\dots,n\}$ holds, we have $\tau(\mathrm{Stab}_{W^{\mathrm{aff}}}(a))= \{\pm 1\}$. Then since $\ker \tau =\mathrm{Stab}_{W^{\mathrm{aff}}_e}(a) $, we conclude that $\mathrm{Stab}_{W^{\mathrm{aff}}}(a)/\mathrm{Stab}_{W^{\mathrm{aff}}_e}(a)\cong \{\pm 1\} $.
\end{enumerate}
\end{proof}

\subsection{Action of $W^e$ on the maximal torus $\R^n/Q^{\vee}$}\

If we have two elements $a,a'\in \R^n$ such that $a'-a=q^{\vee}$, with $q^{\vee}\in Q^{\vee}$, then for $w\in W^e$ we have $wa-wa'=wq^{\vee}\in Q^{\vee}$, i.e. we have a natural action of $W^e$ on the torus $\R^n/Q^{\vee}$. For $x\in \R^n/Q^{\vee}$ we denote the isotropy group and its order by
 \begin{equation}\label{hxe}
h^e_x\equiv |\mathrm{Stab^e} (x)|,\q\mathrm{Stab^e} (x)=\set{w\in W^e}{wx=x}
\end{equation}
We denote the orbit and its order by
 \begin{equation*}
\ep^e(x)\equiv |W^ex|,\q W^e x=\set{wx\in \R^n/Q^{\vee} }{w\in W^e}.
\end{equation*}
Clearly we have
\begin{equation}\label{epE}
\ep^e(x)=\frac{|W^e|}{h^e_x}.
\end{equation}

\begin{tvr}\begin{enumerate}
\item For any $x\in \R^n/Q^{\vee}$ there exists $x'\in F^e \cap \R^n/Q^{\vee} $ and $w\in W^e$ such that
\begin{equation}\label{rfun1E}
 x=wx'.
\end{equation}

\item If $x,x'\in F^e \cap \R^n/Q^{\vee} $ and $x'=wx$, $w\in W^e$ then
\begin{equation}\label{rfun2E}
 x'=x=wx.
\end{equation}
\item If $x\in F^e \cap \R^n/Q^{\vee} $, i.e. $x=a+Q^{\vee}$, $a\in F^e$ then
\begin{equation}\label{rfunstabE}
\mathrm{Stab^e} (x) \cong \mathrm{Stab}_{W^{\mathrm{aff}}_e}(a).
\end{equation}
\end{enumerate}
\end{tvr}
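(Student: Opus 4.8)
The plan is to transport the three statements of Proposition~\ref{tvrfund}, which concern the action of the \emph{affine} group $W^{\mathrm{aff}}_e$ on $\R^n$, down to the action of $W^e$ on the torus $\R^n/Q^{\vee}$. The key observation is that the translation subgroup $Q^{\vee}\rtimes\{1\}\subset W^{\mathrm{aff}}_e$ acts trivially on the quotient, so that the residual action of $W^{\mathrm{aff}}_e$ on $\R^n/Q^{\vee}$ factors precisely through $W^e$. Throughout I would write $x=a+Q^{\vee}$ for a torus point with representative $a\in\R^n$, and translate each torus-level equation $x'=wx$ into the affine-level equation $a'=wa+q^{\vee}$ for a suitable $q^{\vee}\in Q^{\vee}$, where I may then invoke the corresponding part of Proposition~\ref{tvrfund}.

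For part~(1), given $x\in\R^n/Q^{\vee}$, I would pick any representative $a\in\R^n$ and apply \eqref{fun1E} to produce $a'\in F^e$, $w\in W^e$ and $q^{\vee}\in Q^{\vee}$ with $a=wa'+q^{\vee}$. Passing to the quotient kills the translation, leaving $x=w(a'+Q^{\vee})$ with $x'\equiv a'+Q^{\vee}\in F^e\cap\R^n/Q^{\vee}$, which is exactly \eqref{rfun1E}. For part~(2), suppose $x,x'\in F^e\cap\R^n/Q^{\vee}$ with $x'=wx$ for $w\in W^e$. Choosing representatives $a,a'\in F^e$ of $x,x'$, the torus equation lifts to $a'=wa+q^{\vee}$ for some $q^{\vee}\in Q^{\vee}$; this is the hypothesis of \eqref{fun2E}, which forces $a'=a$, hence $x'=x$ as required for \eqref{rfun2E}.

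For part~(3) I would construct an explicit isomorphism $\mathrm{Stab}_{W^{\mathrm{aff}}_e}(a)\to\mathrm{Stab^e}(x)$. The natural candidate is the map $w^{\mathrm{aff}}=T(q^{\vee})w\mapsto w$ sending an affine element to its linear part, using the unique decomposition from \eqref{directE}. If $w^{\mathrm{aff}}a=a$, i.e.\ $wa+q^{\vee}=a$, then modulo $Q^{\vee}$ we get $wx=x$, so $w\in\mathrm{Stab^e}(x)$; this shows the map is well defined, and it is a homomorphism because the linear-part projection is one. I expect the main work to be checking injectivity and surjectivity. For injectivity, if the linear part $w$ is trivial then $wa+q^{\vee}=a$ reduces to $q^{\vee}=0$, so $w^{\mathrm{aff}}=1$. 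For surjectivity, given $w\in\mathrm{Stab^e}(x)$ one has $wa\equiv a\pmod{Q^{\vee}}$, so $wa+q^{\vee}=a$ for a unique $q^{\vee}\in Q^{\vee}$, and then $T(q^{\vee})w\in\mathrm{Stab}_{W^{\mathrm{aff}}_e}(a)$ maps to $w$.

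The step I expect to be the genuine obstacle is the well-definedness of the lifting in parts~(2) and~(3), namely the tacit claim that when $x,x'\in F^e$ and $x'=wx$ on the torus, the \emph{specific} representatives $a,a'\in F^e$ can be related by an honest $Q^{\vee}$-translation $a'=wa+q^{\vee}$ rather than merely being congruent to such a relation through some other choice of representatives. This is where one must be careful that $F^e$ meets each torus class in a controlled way; I would verify that any two representatives of $x$ in $F^e$ already coincide (using part~(2) applied with $w=1$, or equivalently the disjointness of the two pieces $F$ and $r_j\int(F)$ together with Proposition~\ref{tvrfund}), so that the correspondence between torus points in $F^e\cap\R^n/Q^{\vee}$ and their affine representatives in $F^e$ is a genuine bijection. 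Once this representative-uniqueness is secured, the three parts follow mechanically from their affine counterparts.
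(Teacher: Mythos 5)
Your proposal is correct and follows essentially the same route as the paper: parts (1) and (2) are direct descents of the affine statements \eqref{fun1E} and \eqref{fun2E} to the torus, and part (3) uses the same linear-part homomorphism $T(q^{\vee})w\mapsto w$ on $\mathrm{Stab}_{W^{\mathrm{aff}}_e}(a)$, with trivial kernel and image $\mathrm{Stab^e}(x)$. The representative-lifting issue you flag is in fact automatic (the torus equation $x'=wx$ immediately gives $a'-wa\in Q^{\vee}$ for the chosen representatives), and your resolution of it is sound.
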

\begin{proof}
\begin{enumerate}
 \item Follows directly from (\ref{fun1E}).
\item Follows directly from (\ref{fun2E}).
\item We have from (\ref{directE}) that for any $w^{\mathrm{aff}}\in W^{\mathrm{aff}}_e$ there exist a unique $w\in W^e$ and a unique shift $T(q^{\vee})$ such that $w^{\mathrm{aff}}=T(q^{\vee})w$. Define a homomorphism $\psi: \mathrm{Stab}_{W^{\mathrm{aff}}_e}(a)\map W^e $ for $w^{\mathrm{aff}}\in \mathrm{Stab}_{W^{\mathrm{aff}}_e}(a)$ by $\psi(w^{\mathrm{aff}})=\psi(T(q^{\vee})w)=w$.
If $a=w^{\mathrm{aff}}a=wa+q^{\vee}$ then $a-wa=q^{\vee}\in Q^{\vee}$, i.e. we obtain $w\in \mathrm{Stab^e} (x)$ and vice versa. Thus, $\psi (\mathrm{Stab}_{W^{\mathrm{aff}}_e}(a))=\mathrm{Stab^e}(x)$ holds. We also have
\begin{equation*}
\mathrm{ker}\,\psi= \left\{ T(q^{\vee})\in \mathrm{Stab}_{W^{\mathrm{aff}}_e}(a) \right\}=1.
\end{equation*}
\end{enumerate}

\end{proof}

\subsection{Dual Lie algebra}\

The set of simple dual roots $\Delta^{\vee}=\{\al^{\vee}_1,\dots , \al^{\vee}_n\}$ is a system of simple roots of some simple Lie algebra. The system
$\Delta^{\vee}$ also spans Euclidean space $\R^n$.

The dual system $\Delta^{\vee}$ determines:
\begin{itemize}
\item
The highest dual root $\eta\equiv -\al_0^{\vee}= m_1^{\vee}\al_1^{\vee} + \dots + m_n^{\vee} \al_n^{\vee}$. Here the coefficients $m^{\vee}_j$ are called the dual marks of $G$.

\item
The dual Cartan matrix $C^{\vee}$
 \begin{equation*}
 C^{\vee}_{ij}=\frac{2\sca{\al^{\vee}_i}{\al^{\vee}_j} }{\sca{\al^{\vee}_j}{\al^{\vee}_j}}=C_{ji},\q i,j\in\{1,\dots,n\}.
\end{equation*}

\item
The dual root lattice
 \begin{equation*}
 Q^{\vee}=\Z \al^{\vee}_1+\dots +\Z \al^{\vee}_n.
\end{equation*}

\item
The root lattice
\begin{equation*}
 Q=\Z \al_1+\dots +\Z \al_n\,,\quad\text{where}\quad \al_i=\frac{2\al_i^{\vee}}{\sca{\al_i^{\vee}}{\al_i^{\vee}}}.
\end{equation*}

\item The $\Z$-dual lattice
\begin{equation*}
 P=\set{\om\in \R^n}{\sca{\om}{\al^{\vee}}\in\Z,\, \forall \al^{\vee} \in \Delta^{\vee}}=\Z \om_1+\dots +\Z \om_n.
\end{equation*}

\end{itemize}
\subsection{Dual affine Weyl group and its even subgroup}\

Dual affine Weyl group $\widehat{W}^{\mathrm{aff}}$ is a semidirect product of the group of shifts $Q$ and the Weyl group $W$
\begin{equation}\label{directd}
 \widehat{W}^{\mathrm{aff}}= Q \rtimes W.
\end{equation}
Equivalently, $\widehat{W}^{\mathrm{aff}}$ is generated by reflections $r_i$ and reflection $r_0^{\vee}$, where
\begin{equation*}
r_0^{\vee} a=r_{\eta} a + \frac{2\eta}{\sca{\eta}{\eta}}, \q r_{\eta}a=a-\frac{2\sca{a}{\eta} }{\sca{\eta}{\eta}}\eta,\q a\in\R^n.
\end{equation*}

The fundamental region $F^\vee$ of $\widehat{W}^{\mathrm{aff}}$ is the convex hull of the vertices $\left\{ 0, \frac{\om_1}{m^{\vee}_1},\dots,\frac{\om_n}{m^{\vee}_n} \right\}$:

\begin{align}
  F^\vee &=\setb{z_1\om_1+\dots+z_n\om_n}{z_0,\dots, z_n\in \R^{\geq 0}, \, z_0+z_1 m_1^{\vee}+\dots+z_n m^{\vee}_n=1  }\label{defdfun} \\
&= \setb{a\in \R^n}{\sca{a}{\al^{\vee}}\geq 0,\forall \al^{\vee}\in\Delta^{\vee},\sca{a}{\eta}\leq 1  }\nonumber
\end{align}

The dual even affine Weyl group $\widehat{W}^{\mathrm{aff}}_e$ is the semidirect product of the group of translations $Q$, and of the even Weyl group~$W^e$
\begin{equation}\label{directEd}
 \widehat{W}^{\mathrm{aff}}_e= Q \rtimes W^e.
\end{equation}

We choose some fixed $j\in \{1,\dots,n\} $ and define the set $F^{e\vee}$ by
\begin{equation}\label{fundEd}
 F^{e\vee} = F^\vee \cup r_j \int (F^\vee).
\end{equation}

Analogously to Proposition \ref{tvrfund}, we obtain that $F^{e\vee}$ is a fundamental region of the dual even affine Weyl group~$\widehat{W}^{\mathrm{aff}}_e$.
\begin{tvr}
For the set $F^{e\vee}$, it holds that
\begin{enumerate}
\item For any $a\in \R^n$, there exists $a'\in F^{e\vee}$, $w\in W^e$ and $q\in Q$ such that
\begin{equation}\label{fun1Ed}
 a=wa'+q.
\end{equation}

\item If $a,a'\in F^{e\vee}$ and $a'=w^{\mathrm{aff}}a$, $w^{\mathrm{aff}}\in \widehat{W}^{\mathrm{aff}}_e$ then $a=a'$, i.e. if there exist $w\in W^e$ and $q\in Q$ such that $a'=wa+q$ then
\begin{equation}\label{fun2Ed}
 a'=a=wa+q.
\end{equation}

\item
Consider a point $a\in F^{e\vee}$. If $a\in \int(F^{\vee})$ or $a\in r_j \int(F^{\vee})$, then the
isotropy group
 \begin{equation}\label{stabEd}
 \mathrm{Stab}_{\widehat{W}^{\mathrm{aff}}_e}(a) = \setb{w^{\mathrm{aff}}\in \widehat{W}^{\mathrm{aff}}_e}{w^{\mathrm{aff}}a=a}
\end{equation}
is trivial, $\mathrm{Stab}_{\widehat{W}^{\mathrm{aff}}_e}(a)=1$. If $a\in F^{\vee} \setminus \int(F^{\vee}) $ then it holds
 \begin{equation}\label{stabEd2}
 \abs{\mathrm{Stab}_{\widehat{W}^{\mathrm{aff}}}(a)} = 2\abs{\mathrm{Stab}_{\widehat{W}^{\mathrm{aff}}_e}(a)},
\end{equation}
where
\begin{equation}\label{stabd}
 \mathrm{Stab}_{\widehat{W}^{\mathrm{aff}}}(a) = \setb{w^{\mathrm{aff}}\in \widehat{W}^{\mathrm{aff}}}{w^{\mathrm{aff}}a=a}.
\end{equation}
\end{enumerate}
\end{tvr}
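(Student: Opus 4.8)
The plan is to mirror the proof of Proposition~\ref{tvrfund} essentially verbatim, exploiting the fact that the dual data $(\widehat{W}^{\mathrm{aff}}, F^\vee, Q, \eta)$ plays exactly the same structural role here as $(W^{\mathrm{aff}}, F, Q^\vee, \xi)$ did there. The crucial preliminary observation is that $\widehat{W}^{\mathrm{aff}} = Q \rtimes W$ from (\ref{directd}) is itself an ordinary affine Weyl group --- namely the one attached to the dual root system $\Delta^\vee$ --- so its fundamental simplex $F^\vee$ from (\ref{defdfun}) satisfies the exact analogues of (\ref{fun1}), (\ref{fun2}), and the stabilizer description (\ref{stab}). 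Concretely: every $a \in \R^n$ can be written $a = wa' + q$ with $a' \in F^\vee$, $w \in W$, $q \in Q$; if $a, a' \in F^\vee$ are related by such a transformation then $a = a'$; and $\mathrm{Stab}_{\widehat{W}^{\mathrm{aff}}}(a)$ is trivial for $a \in \int(F^\vee)$, otherwise generated by the reflections $r_i$ ($i \in \{1, \dots, n\}$) and $r_0^\vee$ fixing the appropriate walls. I also record that the even subgroup $W^e \subset W$ is literally the same group in both settings (it is the kernel of $\det$ on the common reflection group $W$), so the same fixed index $j$ and the same reflection $r_j$ are used throughout.

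With these facts in hand, part~(1) follows the branching of Proposition~\ref{tvrfund}: writing $a = wa' + q$ with $a' \in F^\vee$, I keep $a'$ if $\det w = 1$, and otherwise correct by an element of $\mathrm{Stab}_{\widehat{W}^{\mathrm{aff}}}(a')$ of determinant $-1$. When $a' \in \int(F^\vee)$ this correction is $r_j$ together with the replacement $a' \mapsto r_j a' \in F^{e\vee}$; when $a' \in F^\vee \setminus \int(F^\vee)$ it is one of the wall reflections $r_i$ fixing $a'$. The one point requiring genuine (if brief) verification is the case $i = 0$, where the reflection is the affine $r_0^\vee a = r_\eta a + 2\eta/\sca{\eta}{\eta}$: the translational correction to the lattice element is $2\eta/\sca{\eta}{\eta} = \eta^\vee$, and I must check this lies in $Q$. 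This is the dual counterpart of the fact that $2\xi/\sca{\xi}{\xi} = \xi^\vee \in Q^\vee$ used in Proposition~\ref{tvrfund}; it holds because $\eta$ is the highest root of $\Delta^\vee$, so its coroot lies in the coroot lattice of $\Delta^\vee$, which is precisely $Q = \Z\al_1 + \dots + \Z\al_n$.

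Parts~(2) and~(3) then transcribe directly. For~(2), the decomposition $F^{e\vee} = F^\vee \cup r_j \int(F^\vee)$ from (\ref{fundEd}) splits into three cases; the cases $a, a' \in F^\vee$ and $a, a' \in r_j \int(F^\vee)$ reduce to the dual form of (\ref{fun2}) (after conjugating by $r_j$ in the second), while the mixed case $a' \in F^\vee$, $a \in r_j \int(F^\vee)$ is excluded exactly as before: it would force $wr_j = 1$ by triviality of the interior stabilizer, hence $\det w = -1$, contradicting $w \in W^e$. For~(3), when $a \in \int(F^\vee)$ the trivial stabilizer transports to $r_j a$ by conjugacy and restricts to $W^e$; when $a \in F^\vee \setminus \int(F^\vee)$ I define $\tau: \mathrm{Stab}_{\widehat{W}^{\mathrm{aff}}}(a) \to \{\pm 1\}$ by $\tau(T(q)w) = \det w$, observe that some fixing reflection $r_i$ has $\tau(r_i) = -1$ so that $\tau$ is onto, and identify $\ker\tau = \mathrm{Stab}_{\widehat{W}^{\mathrm{aff}}_e}(a)$ to obtain the index-two relation (\ref{stabEd2}). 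I expect no real obstacle beyond the $\eta^\vee \in Q$ check; the entire difficulty is notational bookkeeping to confirm that passing from the root to the coroot picture leaves every structural ingredient intact.
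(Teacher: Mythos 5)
Your proof is correct and follows exactly the route the paper intends: the paper states this proposition without proof, remarking only that it is obtained ``analogously to Proposition~\ref{tvrfund}'', and your argument is precisely that analogy carried out in full, based on the observation that $\widehat{W}^{\mathrm{aff}}=Q\rtimes W$ is the affine Weyl group of the dual root system so that the dual analogues of (\ref{fun1}), (\ref{fun2}) and (\ref{stab}) are available. You also correctly isolate and justify the one substantive check the transcription requires, namely that $2\eta/\sca{\eta}{\eta}\in Q$.
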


\section{Grids $F^{e}_M$ and $\Lambda^{e}_M$ }

\subsection{Grid $F^{e}_M$}\

The grid $F^{e}_M$ is the finite fragment of the lattice $\frac{1}{M}P^{\vee}$ which is found inside of $F^e$. Suppose we have a fixed $M\in \N$ and consider the $W$-invariant group $\frac{1}{M}P^{\vee}/Q^{\vee}$. The group $\frac{1}{M}P^{\vee}/Q^{\vee}$ is finite with the order
\begin{equation}\label{grupa}
\abs{\frac{1}{M}P^{\vee}/Q^{\vee}}=cM^n.
\end{equation}

We define the grid $F^{e}_M$ as such cosets from $\frac{1}{M}P^{\vee}/Q^{\vee}$ which have a representative element in the fundamental domain $F^{e}$:
\begin{equation*}
F^{e}_M\equiv\frac{1}{M}P^{\vee}/Q^{\vee}\cap F^{e}.
 \end{equation*}

From the relation (\ref{rfun1E}), we have that
\begin{equation}\label{WFM}
 W^eF^e_M = \frac{1}{M}P^{\vee}/Q^{\vee}.
 \end{equation}

The grid $F^e_M$ can be viewed as a union of two disjoint grids -- the grid $F_M\equiv \frac{1}{M}P^{\vee}/Q^{\vee}\cap F$ and the reflection $r_j$ of its interior $\wt{F}_M\equiv \frac{1}{M}P^{\vee}/Q^{\vee}\cap \int(F)$,
 \begin{equation}\label{disFM}
F^{e}_M=F_M\cup r_j\wt{F}_M.
 \end{equation}

We obtain from (\ref{deffun}) that the set $F_M$, or more precisely its representative points, can be identified as
\begin{equation}\label{FM}
 F_M = \setb{\frac{s_1}{M}\om^{\vee}_1+\dots+\frac{s_n}{M}\om^{\vee}_n}{s_0,s_1,\dots ,s_n \in \Z^{\geq 0},\, s_0+\sum_{i=1}^n s_im_i=M}.
 \end{equation}
The reflection $r_j$ of its interior $\wt{F}_M$ is given by
\begin{equation}\label{riFM}
 r_j \wt F_M = \setb{\frac{s'_1}{M}\om^{\vee}_1+\dots+ \frac{s'_j}{M}(\om^{\vee}_j-\al^{\vee}_j)+\dots+\frac{s'_n}{M}\om^{\vee}_n}{s'_0,s'_1,\dots ,s'_n \in \N,\, s'_0+\sum_{i=1}^n s'_im_i=M}.
 \end{equation}

\subsection{Number of elements of $F^{e}_M$}\

The number of elements of $F^e_M$ could be obtained by combining results from \cite{HP}.
\begin{tvr}\label{numE}
Let $m$ be the Coxeter number. Then
$$| F^e_M|=\begin{cases}\abs{F_M} & M<m \\ \abs{F_M}+1 & M=m \\ \abs{F_M}+\abs{F_{M-m}} & M>m.\end{cases}$$
\end{tvr}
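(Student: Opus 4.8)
The plan is to exploit the disjoint decomposition $F^e_M = F_M \cup r_j \wt{F}_M$ from (\ref{disFM}), which immediately gives
$$|F^e_M| = |F_M| + |r_j \wt{F}_M| = |F_M| + |\wt{F}_M|,$$
the last equality holding because $r_j$ is a bijection of $\R^n$ and hence carries $\wt F_M$ bijectively onto $r_j\wt F_M$. Thus the whole computation reduces to counting $|\wt{F}_M|$, the number of points of $\frac{1}{M}P^\vee/Q^\vee$ lying in the open simplex $\int(F)$, and comparing it to $|F_{M-m}|$.

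Next I would read the count of $|\wt{F}_M| = |r_j \wt{F}_M|$ directly off the parametrization (\ref{riFM}): it equals the number of tuples $(s'_0, s'_1, \dots, s'_n)$ of \emph{strictly positive} integers satisfying $s'_0 + \sum_{i=1}^n s'_i m_i = M$. On the other hand, by the parametrization (\ref{FM}) of $F_{M-m}$, the quantity $|F_{M-m}|$ is the number of tuples $(t_0, t_1, \dots, t_n)$ of \emph{non-negative} integers with $t_0 + \sum_{i=1}^n t_i m_i = M - m$.

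The key step is then the shift $t_i = s'_i - 1$ for $i = 0, 1, \dots, n$, which is a bijection between strictly positive tuples and non-negative tuples. Using the definition $m = 1 + m_1 + \dots + m_n$ of the Coxeter number, the constraint $s'_0 + \sum_{i=1}^n s'_i m_i = M$ becomes
$$t_0 + \sum_{i=1}^n t_i m_i = M - \Bigl(1 + \sum_{i=1}^n m_i\Bigr) = M - m,$$
so this bijection identifies the two counting problems and yields $|\wt{F}_M| = |F_{M-m}|$ whenever $M - m \geq 0$.

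Finally I would split into cases according to the sign of $M - m$. If $M < m$, the target value $M - m$ is negative, so there is no admissible tuple and $|\wt{F}_M| = 0$, giving $|F^e_M| = |F_M|$. If $M = m$, the shifted constraint forces $t_0 = t_1 = \dots = t_n = 0$, the unique solution, so $|\wt{F}_M| = 1$ and $|F^e_M| = |F_M| + 1$. If $M > m$, the bijection gives $|\wt{F}_M| = |F_{M-m}|$ with $M - m \geq 1$, so $|F^e_M| = |F_M| + |F_{M-m}|$. The argument is entirely elementary; the only point requiring care is the boundary value $M = m$, where one must confirm that the empty sum in (\ref{FM}) indeed yields the single zero tuple (i.e. $|F_0| = 1$) rather than falling under the vanishing case $M < m$.
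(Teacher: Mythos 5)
Your proposal is correct and follows essentially the same route as the paper: both start from the disjoint decomposition $F^e_M = F_M \cup r_j\wt{F}_M$ and reduce everything to computing $|\wt{F}_M|$. The only difference is that the paper simply cites Proposition 3.5 of \cite{HP} for the three-case formula for $|\wt{F}_M|$, whereas you prove it directly via the shift bijection $t_i = s'_i - 1$ (using $m = 1 + m_1 + \dots + m_n$), which makes the argument self-contained and is a valid proof of the cited fact.
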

\begin{proof}
Considering the equality $|\wt F_M|=|r_j \wt F_M|$, we obtain from the disjoint decomposition (\ref{disFM})
that
\begin{equation}\label{soucet}
\abs{F^{e}_M}=\abs{F_M}+ |\wt{F}_M|.
\end{equation}
It was shown in Proposition 3.5 in \cite{HP}, that
\begin{equation}\label{intcases}
|\wt F_M|=\begin{cases}0 & M<m \\ 1 & M=m \\ |F_{M-m}| & M>m .\end{cases}.
\end{equation}
\end{proof}

\begin{thm}\label{numAn}
The numbers of points of the grid $F^e_M$ of Lie algebras $A_n$, $B_n$, $C_n$, $D_n$ are given by the following relations:
\begin{enumerate}
 \item $A_n,\,n\geq 1$ $$|F^e_M(A_n)|=\begin{pmatrix}n+M\\ n \end{pmatrix}+\begin{pmatrix}M-1\\ n \end{pmatrix}$$
\item  $C_n,\,n\geq 2$
\begin{eqnarray*}
|F^e_{2k}(C_n)|=& \begin{pmatrix}n+k \\ n \end{pmatrix}+\begin{pmatrix}n+k-1 \\ n \end{pmatrix}+\begin{pmatrix}k \\ n \end{pmatrix}+\begin{pmatrix}k-1 \\ n \end{pmatrix}\\
|F^e_{2k+1}(C_n)|=& 2\begin{pmatrix}n+k \\ n \end{pmatrix}+2\begin{pmatrix}k \\ n \end{pmatrix} \end{eqnarray*}
\item $B_n,\,n\geq 3$ $$|F^e_M(B_n)|=|F^e_M(C_n)|$$
\item $D_n,\,n\geq 4$\begin{eqnarray*}
|F^e_{2k}(D_n)|=& \begin{pmatrix}n+k \\ n \end{pmatrix}+6\begin{pmatrix}n+k-1 \\ n \end{pmatrix}+\begin{pmatrix}n+k-2 \\ n \end{pmatrix}+\begin{pmatrix}k+1 \\ n \end{pmatrix}+6\begin{pmatrix}k \\ n \end{pmatrix}+\begin{pmatrix}k-1 \\ n \end{pmatrix}\\
|F^e_{2k+1}(D_n)|=& 4\begin{pmatrix}n+k \\ n \end{pmatrix}+4\begin{pmatrix}n+k-1 \\ n \end{pmatrix}+4\begin{pmatrix}k+1 \\ n \end{pmatrix}+4\begin{pmatrix}k \\ n \end{pmatrix} \end{eqnarray*}
\end{enumerate}
\end{thm}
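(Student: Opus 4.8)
The plan is to derive all four families of formulas from Proposition~\ref{numE}, so that the only type-dependent data I need are the Coxeter number $m$ and the cardinality $|F_M|$ of the ordinary grid. First I would observe that the three branches of Proposition~\ref{numE} can be written uniformly as
\[
|F^e_M| = |F_M| + |F_{M-m}|,
\]
once we adopt the conventions $|F_{M'}| = 0$ for $M' < 0$ and $|F_0| = 1$. The value $|F_0| = 1$ is forced by the description (\ref{FM}): for $M = 0$ the only nonnegative integer solution of $s_0 + \sum_i s_i m_i = 0$ is $s_0 = \dots = s_n = 0$. This reproduces the $+1$ of the case $M = m$, while a negative index annihilates the second term in the case $M < m$.

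Next I would supply the two type-dependent ingredients. The Coxeter numbers follow from $m = 1 + m_1 + \dots + m_n$ together with the standard marks: $m(A_n) = n+1$, $m(B_n) = m(C_n) = 2n$, and $m(D_n) = 2n - 2$. The cardinalities $|F_M|$ are the counts of nonnegative integer solutions of $s_0 + \sum_{i=1}^n s_i m_i = M$ appearing in (\ref{FM}); carrying out this weighted stars-and-bars count, as done in \cite{HP}, yields $|F_M(A_n)| = \binom{M+n}{n}$, while for $C_n$ (two unit marks and $n-1$ marks equal to $2$, counting the coefficient of $s_0$) one gets $|F_{2k}(C_n)| = \binom{n+k}{n} + \binom{n+k-1}{n}$ and $|F_{2k+1}(C_n)| = 2\binom{n+k}{n}$, and similarly the six- and four-term expressions for $D_n$ (four unit marks and $n-3$ marks equal to $2$).

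With these in hand, the proof is substitution. For $A_n$ the term $|F_{M-m}| = |F_{M-n-1}|$ equals $\binom{(M-n-1)+n}{n} = \binom{M-1}{n}$, which reproduces the stated formula and, via the convention $\binom{a}{n}=0$ for $a<n$, silently covers the ranges $M<m$, $M=m$, $M>m$ at once. For $C_n$ and $D_n$ the Coxeter number is even, so $M$ and $M-m$ share parity and I select the matching even/odd sub-formula; the shift by $m$ lowers each binomial argument (namely $n+(k-n)=k$ for $C_n$ and $n+(k-n+1)=k+1$ for $D_n$), producing exactly the small-argument binomials in the statement. Finally, the identity $|F^e_M(B_n)| = |F^e_M(C_n)|$ I would obtain without recomputation: the marks of $B_n$ and of $C_n$ (including the unit coefficient of $s_0$) form the same multiset $\{1,1,2,\dots,2\}$, so the solution count in (\ref{FM}) is literally the same function of $M$, and the Coxeter numbers coincide, whence the unified formula gives the same answer.

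The only delicate point is the index-shift and parity bookkeeping in the substitution step: one must verify that for even Coxeter number the correct even/odd branch of $|F_{M-m}|$ is used, that each shifted argument lands on the advertised value, and that the binomial convention $\binom{a}{n}=0$ for $a<n$ together with $|F_0|=1$ correctly merges the three regimes of Proposition~\ref{numE}, in particular that the boundary value $M=m$ contributes precisely $1$. The genuinely combinatorial labor, namely the closed evaluation of $|F_M|$ for $C_n$ and $D_n$, is inherited from \cite{HP} and is not repeated here.
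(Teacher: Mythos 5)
Your proposal is correct and follows essentially the same route as the paper: both reduce the claim to Proposition~\ref{numE} together with the closed forms of $|F_M|$ and the Coxeter numbers taken from \cite{HP}, and then verify the binomial index shift across the three regimes $M<m$, $M=m$, $M>m$. Your unification of the three cases via the conventions $|F_{M'}|=0$ for $M'<0$ and $|F_0|=1$, and the observation that $B_n$ and $C_n$ share the same multiset of marks, are just more explicit versions of what the paper's terse proof does.
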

\begin{proof}
For the case $A_n$, we have from \cite{HP} that $m=n+1$ and $|F_M(A_n)|=\begin{pmatrix}n+M\\ n \end{pmatrix}$. It can be verified directly that the formula
 $$|\wt F_M(A_n)|=\begin{pmatrix}M-1\\ n \end{pmatrix}$$
satisfies (\ref{intcases}) for all values of $M\in \N$. The result follows from (\ref{soucet}). Analogously we obtain formulas for algebras $B_n$, $C_n$ and $D_n$.
\end{proof}

Using explicit formulas for $\abs{F_M}$ from \cite{HP}, the number $|F^e_M|$ of the five exceptional Lie algebras can be obtained similarly from Proposition \ref{numE}.

\begin{example}\label{ex1}
For the Lie algebra $C_2$, we have Coxeter number $m=4$ and $c=2$. Consider for example $M=4$. For the order of the group $\frac{1}{4}P^{\vee}/Q^{\vee}$, we have from (\ref{grupa}) that $\abs{\frac{1}{4}P^{\vee}/Q^{\vee}}=32$ and according to Theorem \ref{numAn} we calculate $\abs{F^e_4(C_2)}=
10. $
The coset representants of $\frac{1}{4}P^{\vee}/Q^{\vee}$ and the fundamental domain $F^e$ are depicted in Figure \ref{figC2}.
\begin{figure}
\resizebox{11cm}{!}{\input{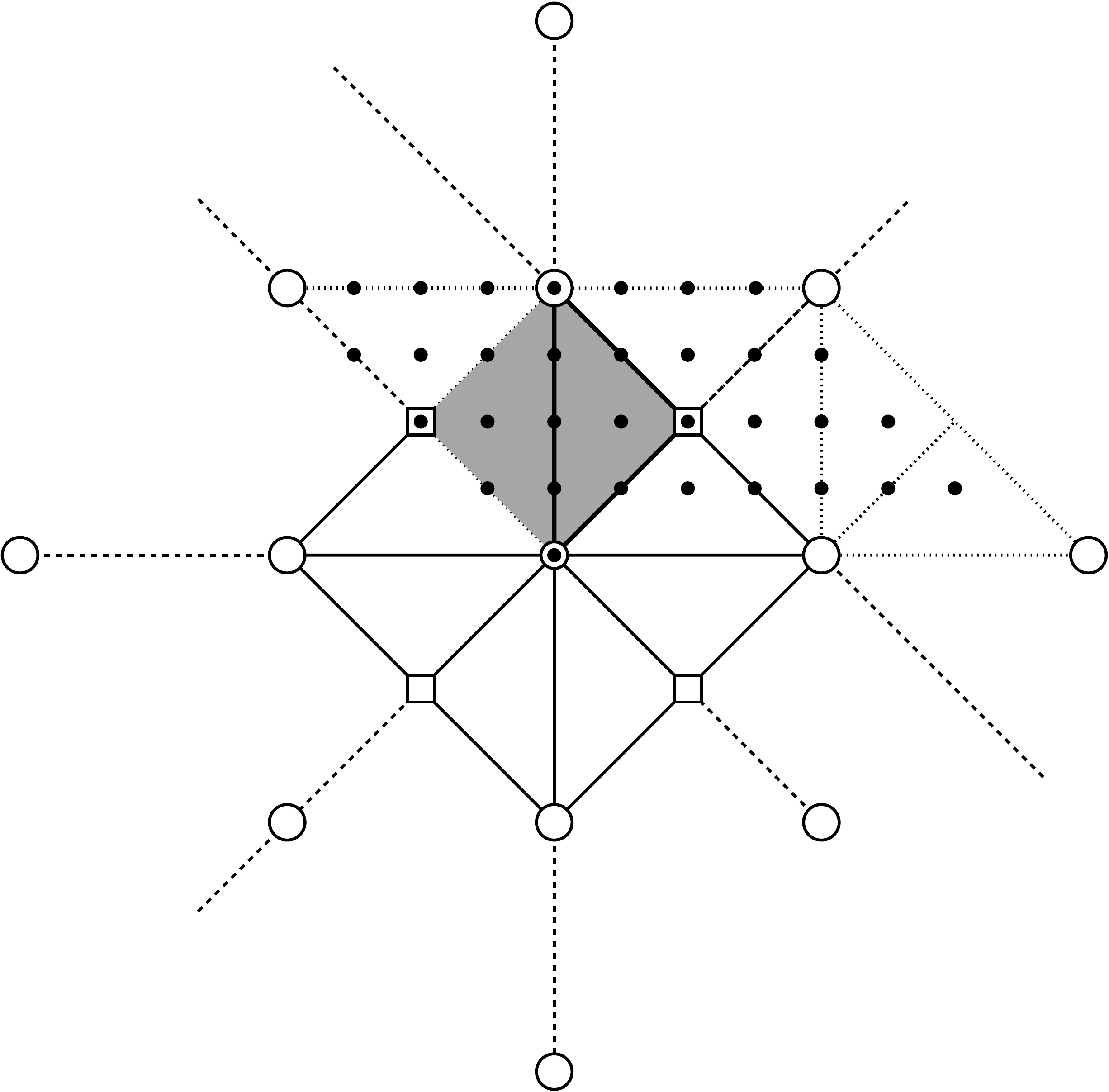_t}}
\caption{ Coset representants of $\frac{1}{4}P^{\vee}/Q^{\vee}$ of $C_2$; coset representants are shown as $32$ black dots, the grey area is the fundamental domain $F^e=F\cup r_1 \int (F)$ containing $10$ points of $F^e_4(C_2)$. Dashed lines represent 'mirrors' $r_0,r_1$ and $r_2$. Circles are elements of the root lattice $Q$, together with squares they are elements of the weight lattice $P$.  }\label{figC2}
\end{figure}
\end{example}

\subsection{Grid $\Lambda^e_M$}\

The points of $\Lambda^e_M$ are the weights that specify $E-$functions belonging to the same pairwise orthogonal set. Further on, we consider $E-$functions that are sampled on the points $F^e_M$. We consider the lowest possible set of such points. The number of points of $\Lambda^e_M$ coincides with the number of points of $F^e_M$.

The $W$-invariant group $P/MQ$ is isomorphic to the group $\frac{1}{M}P^{\vee}/Q^{\vee}$ and its order is given by
\begin{equation*}
\abs{P/MQ}=cM^n.
\end{equation*}
Define the grid $\Lambda^e_M$ as such cosets from $P/MQ$ with representative elements in~$M F^{e\vee}$
\begin{equation*}
 \Lambda^e_M\equiv M F^{e\vee} \cap P/MQ.
 \end{equation*}

The grid $\Lambda^e_M$ can be viewed as a union of two disjoint grids -- the grid $\Lambda_M\equiv P/MQ\cap M F^{\vee}$ and the reflection $r_j$ of its interior $\wt{\Lambda}_M\equiv P/MQ\cap \int(M F^{\vee})$,
 \begin{equation}\label{disLAM}
\Lambda^{e}_M=\Lambda_M\cup r_j\wt{\Lambda}_M.
 \end{equation}

We have from (\ref{defdfun}) that the set $\Lambda_M$, or more precisely its representative points, can be identified as
\begin{equation}\label{LAM}
\Lambda_M = \setb{t_1\om_1+\dots+t_n\om_n}{t_0,t_1,\dots ,t_n \in \Z^{\geq 0},\, t_0+\sum_{i=1}^n t_im^{\vee}_i=M}.
 \end{equation}
The reflection $r_j$ of its interior is given by
\begin{equation}\label{riLAM}
r_j\wt\Lambda_M = \setb{t'_1\om_1+\dots+t'_j(\om_j-\al_j) +\dots +t'_n\om_n}{t'_0,t'_1,\dots ,t'_n \in \N,\, t'_0+\sum_{i=1}^n t'_im^{\vee}_i=M}.
 \end{equation}
Since the $n$--tuple of dual marks $(m^{\vee}_1,\dots,m^{\vee}_n)$ is a certain permutation of the $n$--tuple $(m_1,\dots, m_n)$, we have from (\ref{FM}) and (\ref{LAM}) that $|F_M|=|\Lambda_M|$, and from (\ref{riFM}) and (\ref{riLAM}) that $\abs{r_j\wt\Lambda_M}=\abs{r_j\wt F_M}$. Taking into account disjoint decompositions (\ref{disFM}) and (\ref{disLAM}), we conclude that
\begin{equation}\label{FLE}
 |F^e_M|=|\Lambda^e_M|.
\end{equation}

\subsection{Action of $W^e$ on $P/MQ$}\

If we have two elements $b,b'\in \R^n$ such that $b'-b=Mq$, with $q\in Q$ then for $w\in W^e$ we have $wb-wb'=wMq\in MQ$, i.e. we have a natural action of $W^e$ on the quotient group $\R^n/MQ$. For $\la \in \R^n/MQ$ we denote the order of the stabilizer
\begin{equation}\label{hla}
h^{e\vee}_{\la}\equiv |\mathrm{Stab}^{\vee}_e (\la)|,\q \mathrm{Stab}^{\vee}_e (\la)=\set{w\in W^e}{w\la=\la}.
\end{equation}
\begin{tvr}
\begin{enumerate}
\item For any $\la\in P/MQ$ there exists $\la'\in\Lambda^e_M  $ and $w\in W^e$ such that
\begin{equation}\label{dfun1}
 \la=w\la'.
\end{equation}
\item If $\la,\la'\in \Lambda^e_M $ and $\la'=w\la$, $w\in W^e$ then
\begin{equation}\label{lrfun2}
 \la'=\la=w\la.
\end{equation}

\item If $\la\in M F^{e\vee} \cap \R^n/MQ $, i.e. $\la=b+MQ$, $b\in MF^{e\vee}$ then
\begin{equation}\label{rfunstab2}
\mathrm{Stab}^{\vee}_e (\la) \cong \mathrm{Stab}_{\widehat{W}^{\mathrm{aff}}_e}(b/M).
\end{equation}
\end{enumerate}
\end{tvr}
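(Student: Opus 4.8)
The plan is to prove the three parts of the final proposition by reducing each to the corresponding statement already established for the even affine Weyl group acting on $\R^n$. The key structural observation is that the three parts are the exact duals of the three parts of the second proposition (the one about $W^e$ acting on $\R^n/Q^{\vee}$), with the roles of $Q^{\vee}$, $F^e$, and $W^{\mathrm{aff}}_e$ replaced by $MQ$, $MF^{e\vee}$, and $\widehat{W}^{\mathrm{aff}}_e$. Since the third proposition in the excerpt (for $F^{e\vee}$ and $\widehat{W}^{\mathrm{aff}}_e$) already provides the fundamental-domain properties (\ref{fun1Ed}), (\ref{fun2Ed}), and (\ref{stabEd}), the proof should mirror the earlier $\R^n/Q^{\vee}$ argument almost verbatim.

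First I would handle part (1). Given any $\la\in P/MQ$, pick a representative $b\in\R^n$. Rescaling by $1/M$, apply (\ref{fun1Ed}) to $b/M$ to get $b/M = w a' + q$ with $a'\in F^{e\vee}$, $w\in W^e$, $q\in Q$. Multiplying through by $M$ yields $b = w(Ma') + Mq$ with $Ma'\in MF^{e\vee}$ and $Mq\in MQ$, so passing to cosets gives $\la = w\la'$ with $\la'\in\Lambda^e_M = MF^{e\vee}\cap P/MQ$ (one must note that $Ma'$, being $W^e$-translate-related to the lattice point $b$, indeed lands in $P$, so $\la'\in\Lambda^e_M$ and not merely in $MF^{e\vee}\cap\R^n/MQ$). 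Part (2) is analogous: from $\la,\la'\in\Lambda^e_M$ with $\la'=w\la$, lift to representatives $b,b'\in MF^{e\vee}$, so $b'=wb+Mq$ for some $q\in Q$; dividing by $M$ gives $b'/M = w(b/M)+q$ with $b/M,b'/M\in F^{e\vee}$, and (\ref{fun2Ed}) forces $b/M=b'/M$, hence $\la=\la'$.

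For part (3) I would reproduce the homomorphism argument used for (\ref{rfunstabE}). Using (\ref{directEd}), every element of $\widehat{W}^{\mathrm{aff}}_e$ factors uniquely as $T(q)w$ with $w\in W^e$ and $q\in Q$; define $\psi\colon\mathrm{Stab}_{\widehat{W}^{\mathrm{aff}}_e}(b/M)\to W^e$ by $\psi(T(q)w)=w$. If $b/M = w(b/M)+q$ then $b/M - w(b/M)=q\in Q$, so passing to cosets modulo $MQ$ (after scaling back by $M$) gives $w\la=\la$, i.e. $w\in\mathrm{Stab}^{\vee}_e(\la)$; conversely each such $w$ lifts to a stabilizing affine element. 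This shows $\psi$ is surjective onto $\mathrm{Stab}^{\vee}_e(\la)$, and its kernel consists of pure shifts $T(q)$ fixing $b/M$, which forces $q=0$ and hence $\ker\psi=1$. Therefore $\psi$ is an isomorphism $\mathrm{Stab}_{\widehat{W}^{\mathrm{aff}}_e}(b/M)\cong\mathrm{Stab}^{\vee}_e(\la)$.

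The main obstacle, such as it is, is purely bookkeeping rather than conceptual: one must be careful with the scaling by $M$ throughout, verifying at each step that the lattices $Q$ and $MQ$ and the domains $F^{e\vee}$ and $MF^{e\vee}$ transform correctly, and—in part (1)—that the produced representative $Ma'$ genuinely lies in $P$ so that the coset belongs to $\Lambda^e_M$ rather than only to the larger set $MF^{e\vee}\cap\R^n/MQ$. Since the underlying fundamental-domain facts for $\widehat{W}^{\mathrm{aff}}_e$ are already in hand, I expect no genuine difficulty and the proof to be a short translation of the preceding one.
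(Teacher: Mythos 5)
Your proposal matches the paper's proof essentially verbatim: part (1) applies (\ref{fun1Ed}) to $p/M$ and uses $W$-invariance of $P$ to place $Mp'$ in $\Lambda^e_M$, part (2) rescales and invokes (\ref{fun2Ed}), and part (3) uses the same homomorphism $\psi(T(q)w)=w$ with image $\mathrm{Stab}^{\vee}_e(\la)$ and trivial kernel. The approach and all key steps coincide with the paper's argument.
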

\begin{proof}
\begin{enumerate}
 \item Let $\la\in P/MQ$ be of the form $\la=p+MQ $, $p\in P$. From (\ref{fun1Ed}) it follows that there exist $p'\in F^{e\vee}$, $w\in W^e$ and $q\in Q$ such that
\begin{equation*}
 \frac1M p=w p'+q,
 \end{equation*}
i.e. $ p=wMp'+Mq$.
From $W$--invariance of $P$ we have that $Mp'\in P$, the class $\la'=Mp'+MQ$ is from $\Lambda^e_M$
and (\ref{dfun1}) holds.
\item Let $\la,\la'\in P/MQ$ be of the form $ \la=p+MQ$, $\la'=p'+MQ$ and $p,p'\in MF^{e\vee}$. Suppose that
\begin{equation*}\label{ddfun4}
 p'=wp+Mq,\q q\in Q,\, w\in W^e.
\end{equation*}
Then $p/M,p'/M\in F^{e\vee}$ and it follows from (\ref{fun2Ed}) that $p=p'$.
\item We have from (\ref{directEd}) that for any $w^{\mathrm{aff}}\in \widehat{W}^{\mathrm{aff}}_e$ there exist unique $w\in W^e$ and unique shift $T(q)$ such that $w^{\mathrm{aff}}=T(q)w$. Define a homomorphism $\psi: \mathrm{Stab}_{\widehat{W}^{\mathrm{aff}}_e}(b/M)\map W^e $ for $w^{\mathrm{aff}}\in \mathrm{Stab}_{\widehat{W}^{\mathrm{aff}}_e}(b/M)$ by $\psi(w^{\mathrm{aff}})=\psi(T(q)w)=w$.
If $b/M=w^{\mathrm{aff}}(b/M)=w(b/M)+q$ then $b-wb=Mq\in MQ$, i.e. we obtain $w\in \mathrm{Stab}^{\vee}_e (\la)$ and vice versa. Thus, $\psi (\mathrm{Stab}_{\widehat{W}^{\mathrm{aff}}_e}(b/M))=\mathrm{Stab}^{\vee}_e(\la)$ holds. We also have
\begin{equation*}\label{kerd}
\mathrm{ker}\,\psi= \left\{ T(q)\in \mathrm{Stab}_{\widehat{W}^{\mathrm{aff}}_e}(b/M) \right\}=1.
\end{equation*}
\end{enumerate}
\end{proof}


\begin{example}\label{exdual}
For the Lie algebra $C_2$ we have  $\abs{P/4Q}=32$ and according to (\ref{FLE}) we have $$\abs{\Lambda^e_4(C_2)}=\abs{F^e_4(C_2)}=10. $$
The coset representants of $P/4Q$, the dual fundamental domain $F^{e\vee}$ and the grid $\Lambda^e_4(C_2)=4F^{e\vee}\cap P/4Q$ are depicted in Figure~\ref{figC2d}.
\begin{figure}
\resizebox{11cm}{!}{\input{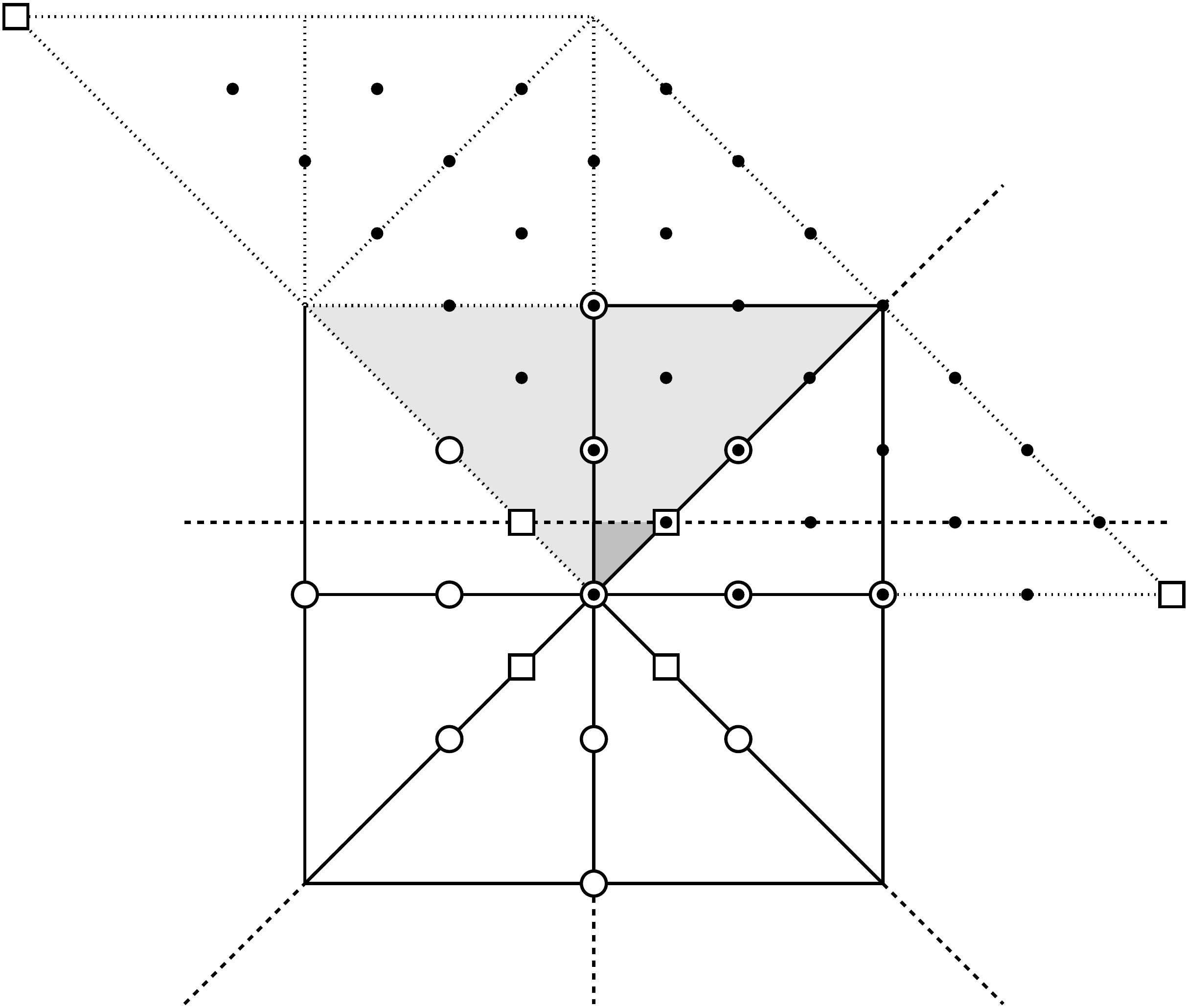_t}}
\caption{ The cosets representants of $P/4Q$ of $C_2$; the cosets representants are shown as $32$ black dots, the darker grey area is the fundamental domain $F^{\vee}$, the lighter grey area is the domain $4F^{e\vee}=4F^{\vee}\cup r_1 \int(4F^{\vee})$ which contains $10$ elements of $\Lambda^e_4(C_2)$. The dashed lines represent dual 'mirrors' $r^\vee_0,r_1$, $r_2$. The circles and squares coincide with those in Figure \ref{figC2}. }\label{figC2d}
\end{figure}
\end{example}

\subsection{Calculation of $h^e_x$ and $h^{e\vee}_\la$}\

Calculation procedure of $h_x\equiv \abs{\mathrm{Stab}_{W^{\mathrm{aff}}}(x)}$ for any $x\in F_M$ and of $h^{\vee}_\la\equiv\abs{ \mathrm{Stab}_{\widehat{W}^{\mathrm{aff}}}(\la)}$, $\la\in \Lambda_M$ was derived in \S 3.7 of \cite{HP}. These calculation procedures use extended Coxeter-Dynkin diagrams ($\mathrm{DD}$) of $G$ and their dual versions $\mathrm{DD}^{\vee}$ (see e.g.\cite{HP}). Modifying these procedures by using the relations (\ref{rfunstabE}), (\ref{stabE2}) and (\ref{rfunstab2}), (\ref{stabEd2}), we deduce a calculation procedure for $h^e_x$, $h^{e\vee}_\la$, defined by (\ref{hxe}), (\ref{hla}), for $x\in F^e_M$ and $\la\in \Lambda^e_M$.

Consider a point $x\in F^e_M=F_M \cup r_j\wt F_M$.
\begin{enumerate}
 \item If $x\in r_j\wt F_M$ then $h^e_x=1$.
 \item Let $[s_0,\dots,s_n]$ be corresponding coordinates of $x\in F_M $ from (\ref{FM}). If $s_0,\dots,s_n$ are all non-zero, then $h^e_x=1$.
 \item If some of the coordinates $[s_0,\dots,s_n]$ are zero then consider such a subgraph $U$ of extended $\mathrm{DD}$ consisting only of those nodes $i$ for which $s_i=0,\, i=0,\dots,n$. The subgraph $U$ consists in general of several connected components $U_l$. Each component~$U_l$ is a (non-extended) $\mathrm{DD}$ of some compact simple Lie group~$G_l$. Take corresponding orders of the Weyl groups $|W_l|$ of $G_l$. Then it holds $$h^e_x=\frac{1}{2}\prod_{l}|W_l|.$$
 \end{enumerate}

 We proceed similarly to determine $h^{e\vee}_\la$ when considering a point $\la\in \Lambda^e_M=\Lambda_M \cup r_j\wt \Lambda_M$.
\begin{enumerate}
 \item If $\la\in r_j\wt\Lambda_M$ then $h^{e\vee}_\la=1$.
 \item Let $[t_0,\dots,t_n]$ be the corresponding coordinates of $\la\in \Lambda_M $ from (\ref{LAM}). If $t_0,\dots,t_n$ are all non-zero then $h^{e\vee}_\la=1$.
 \item If some of the coordinates $[t_0,\dots,t_n]$ are zero then consider such a subgraph $U'$ of the extended $\mathrm{DD}^{\vee}$ consisting only of those nodes $i$ for which $t_i=0,\, i=0,\dots,n$. The subgraph $U'$ consists in general of several connected components $U'_l$. Each component~$U'_l$ is a (non-extended) $\mathrm{DD}$ of some compact simple Lie group~$G'_l$. Take corresponding orders of the Weyl groups $|W'_l|$ of $G'_l$. Then it holds $$h^{e\vee}_\la=\frac{1}{2}\prod_{l}|W'_l|.$$
\end{enumerate}

\section{$W^e-$Invariant functions}
The numbers $h^e_x$, $h^{e\vee}_\la$ and $|F^e_M|$, which were determined so far, are important for the properties of special functions, called $E-$functions when they are sampled on $F^e_M$. A detailed review of the properties of $E-$functions may be found in \cite{KP3}. In this section the goal is to complete and make explicit the orthogonality properties of $E-$functions \cite{MP2}.

\subsection{$E-$functions}\

We recall the definition of $E-$functions and show that they can be labeled by the finite set $\Lambda^e_M$ when sampled on the grid $F^e_M$.

Consider $ b\in P$ and recall that (normalized) $E-$functions can be defined as a mapping $\Xi_b:\R^n\map \Com$
\begin{equation}\label{E}
 \Xi_b(a)=\sum_{w\in W^e} e^{2 \pi \ii \sca{ wb}{a}}.
 \end{equation}
The following properties of $E-$functions are crucial
\begin{itemize}
\item symmetry with respect to $w\in W^e$
\begin{align}\label{Esym}
 \Xi_b(wa)&=\Xi_b(a) \\
\Xi_{wb}(a)&=\Xi_b(a) \label{Esymla}
 \end{align}
\item invariance with respect to $q^{\vee} \in Q^{\vee}$
\begin{equation}\label{Einv}
 \Xi_b(a+q^{\vee})=\Xi_b(a).
 \end{equation}
\end{itemize}

We investigate values of $E-$functions on the grid $F^e_M$. Suppose we have a fixed $M\in \N$ and $s\in \frac{1}{M}P^{\vee}$. From (\ref{Einv}) it follows that we can consider $\Xi_b$ as a function on classes $\frac{1}{M}P^{\vee}/Q^{\vee}$. From (\ref{rfun1E}) and (\ref{Esym}) it follows that we can consider $\Xi_b$ only on the set $F^e_M\equiv\frac{1}{M}P^{\vee}/Q^{\vee}\cap F^e$. We also have
\begin{equation*}
 \Xi_{b+MQ}(s)=\Xi_b(s),\ s\in F^e_M
 \end{equation*}
and thus we can consider the functions $\Xi_\la$ on $F^e_M$ parameterized by classes from $\la\in P/MQ$. Moreover, from (\ref{dfun1}) and (\ref{Esymla}) it follows that {\it we can consider $E-$functions $\Xi_\la$ on $F^e_M$ parameterized by $\la\in\Lambda^e_M$ only}.

\section{Discrete orthogonality of $E-$functions}

\subsection{Basic discrete orthogonality relations}\

Discrete orthogonality of $E-$functions was discussed in general in \cite{MP2}. Practical application of \cite{MP2} is not completely straightforward. Therefore we reformulate the basic facts and subsequently use them to make the discrete orthogonality over $F^e_M$ described in all detail. Basic orthogonality relations from \cite{MP2,HP} are for any $\la,\la' \in P/MQ$ of the form:
\begin{equation}\label{bdis}
 \sum_{y\in\frac{1}{M}P^{\vee}/Q^{\vee}} e^{2\pi\ii\sca{\la-\la'}{y}}=cM^n\delta_{\la,\la'}.
\end{equation}

\subsection{Discrete orthogonality of $E-$functions}\

We define the scalar product of two functions $f,g:F^e_M\map \Com$ by
\begin{equation}
 \sca{f}{g}_{F^e_M}= \sum_{x\in F^e_M}\ep^e(x) f(x)\overline{g(x)},
\end{equation}
where the numbers $\ep^e (x)$ are determined by (\ref{epE}). We show that $\Lambda^e_M$, defined by (\ref{disLAM}), is the lowest maximal set of pairwise orthogonal $E-$functions.
\begin{tvr}
For $\la,\la' \in\Lambda^e_M$ it holds
\begin{equation}\label{orthoE}
 \sca{\Xi_\la}{\Xi_{\la'}}_{F^e_M}=c\abs{W^e}M^n h^{e\vee}_\la \delta_{\la,\la'},
\end{equation}
where $c$, $h^{e\vee}_\la$, $\Xi_\la$ were defined by (\ref{Center}), (\ref{hla}), (\ref{E}) respectively, $n$ is the rank of $G$.
\end{tvr}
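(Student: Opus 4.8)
The plan is to collapse the weighted sum over the fundamental grid $F^e_M$ into a plain sum over the whole torus grid $\frac1M P^\vee/Q^\vee$, then expand the two $E$-functions and apply the elementary orthogonality relation (\ref{bdis}), leaving only a combinatorial count of pairs from $W^e\times W^e$.

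First I would record that the integrand $\Xi_\la(x)\overline{\Xi_{\la'}(x)}$ is a $W^e$-invariant function of $x$ by (\ref{Esym}). By (\ref{WFM}) the grid $\frac1M P^\vee/Q^\vee$ is exactly the union of the $W^e$-orbits of the points of $F^e_M$, and by the fundamental-domain property these orbits are disjoint with unique representatives in $F^e_M$. Since the orbit of $x$ has cardinality $\ep^e(x)=|W^e|/h^e_x$ by (\ref{epE}), summing any $W^e$-invariant function over the torus grid equals summing it over $F^e_M$ with the weights $\ep^e(x)$. Applying this to $\Xi_\la\overline{\Xi_{\la'}}$ gives
$$\sca{\Xi_\la}{\Xi_{\la'}}_{F^e_M}=\sum_{y\in\frac1M P^\vee/Q^\vee}\Xi_\la(y)\overline{\Xi_{\la'}(y)}.$$

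Next I would substitute the definition (\ref{E}), interchange the finite summations, and isolate the inner sum over $y$:
$$\sca{\Xi_\la}{\Xi_{\la'}}_{F^e_M}=\sum_{w,w'\in W^e}\ \sum_{y\in\frac1M P^\vee/Q^\vee}e^{2\pi\ii\sca{w\la-w'\la'}{y}}.$$
Because $P$ is $W$-invariant, both $w\la$ and $w'\la'$ lie in $P/MQ$, so (\ref{bdis}) applies to the inner sum, yielding $cM^n$ when $w\la\equiv w'\la'\pmod{MQ}$ and $0$ otherwise. Hence the expression equals $cM^n$ times the number $N$ of pairs $(w,w')\in W^e\times W^e$ with $w\la\equiv w'\la'\pmod{MQ}$.

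It remains to evaluate $N$, which is where the real argument sits. If $\la\neq\la'$, any such pair would give $\la\equiv(w^{-1}w')\la'\pmod{MQ}$ with $w^{-1}w'\in W^e$, contradicting the uniqueness of representatives in $\Lambda^e_M$ asserted by (\ref{lrfun2}); thus $N=0$ and the functions are orthogonal. If $\la=\la'$, then $w\la\equiv w'\la\pmod{MQ}$ is equivalent to $w'^{-1}w\in\mathrm{Stab}^\vee_e(\la)$ (using that $W^e$ preserves $Q$, hence $MQ$), so for each of the $|W^e|$ choices of $w$ there are exactly $h^{e\vee}_\la=|\mathrm{Stab}^\vee_e(\la)|$ admissible $w'$, giving $N=|W^e|\,h^{e\vee}_\la$; substituting back produces the claimed $c|W^e|M^n h^{e\vee}_\la$. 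I expect the unfolding step to demand the most care, since it rests on $F^e_M$ being a genuine fundamental domain for the $W^e$-action on the grid with orbit sizes $\ep^e(x)$, whereas the pair-counting is then routine orbit–stabilizer bookkeeping built on (\ref{lrfun2}) and (\ref{hla}).
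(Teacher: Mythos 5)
Your proof is correct and follows essentially the same route as the paper's: unfold the $\ep^e$-weighted sum over $F^e_M$ to the full grid $\frac{1}{M}P^{\vee}/Q^{\vee}$ via the orbit decomposition, apply the basic orthogonality relation (\ref{bdis}), and count the surviving terms using (\ref{lrfun2}) and the stabilizer order $h^{e\vee}_\la$. The only cosmetic difference is that the paper first collapses one of the two $W^e$-sums by a change of variable in $y$ (using $W^e$-invariance of the grid) before invoking (\ref{bdis}), whereas you apply (\ref{bdis}) to each pair $(w,w')$ directly; the resulting count is identical.
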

\begin{proof}
The equality
\begin{equation*}
\sum_{x\in F^e_M} \ep^e(x) \Xi_\la(x)\overline{\Xi_{\la'}(x)}= \sum_{y \in \frac{1}{M}P^{\vee}/Q^{\vee}} \Xi_\la(y)\overline{\Xi_{\la'}(y)}
\end{equation*}
follows from (\ref{rfun2E}) and (\ref{WFM}) and $W^e-$invariance of the expression $\Xi_\la(x)\overline{\Xi_{\la'}(x)}$. Then, using $W^e-$invariance of $\frac{1}{M}P^{\vee}/Q^{\vee}$ and (\ref{bdis}), we have
\begin{align*}
\sca{\Xi_\la}{\Xi_{\la'}}_{F^e_M}=&\sum_{w'\in W^e}\sum_{w\in W^e} \sum_{y\in \frac{1}{M}P^{\vee}/Q^{\vee}}e^{2\pi\ii\sca{w\la-w'\la'}{y}}=\abs{W^e}\sum_{w'\in W^e}\sum_{y \in \frac{1}{M}P^{\vee}/Q^{\vee}}e^{2\pi\ii\sca{\la-w'\la'}{y}}\\ =& c\abs{W^e}M^n \sum_{w'\in W^e} \delta_{w'\la',\la}.
\end{align*}
Since $\la,\la' \in\Lambda^e_M$ we have from (\ref{lrfun2}) that $$ \sum_{w'\in W^e} \delta_{w'\la',\la}=h^{e\vee}_{\la}\delta_{\la,\la'}. $$
\end{proof}

\begin{example}
The highest root $\xi$ and the highest dual root $\eta$ of $C_2$ are determined by the formulas
$$\xi=2\al_1+\al_2,\ \ \eta=\al^\vee_1+2\al^\vee_2.$$
The even Weyl group of $C_2$ has four elements, $|W^e|=4$, and we obtain for the determinant of the Cartan matrix 
$c=2$.
Thus, decomposition of the grid $F^e_M(C_2)=F_M(C_2)\cup r_1 \wt{F}_M(C_2)$ is given by
\begin{align} F_M(C_2) =& \setb{\frac{s_1}{M}\om^{\vee}_1+\frac{s_2}{M}\om^{\vee}_2}{s_0,s_1,s_2\in \Z^{\geq 0},\, s_0+2s_1+s_2=M}\label{C2Me} \\
 r_1\wt F_M(C_2) =& \setb{\frac{-s'_1}{M}\om^{\vee}_1+\frac{s'_2+2s'_1}{M}\om^{\vee}_2}{s'_0,s'_1,s'_2\in \N,\, s'_0+2s'_1+s'_2=M} \nonumber
\end{align}

and the grid of weights $\Lambda^e_M(C_2)=\Lambda_M(C_2)\cup r_1 \wt{\Lambda}_M(C_2)$ is determined by
\begin{align}
\Lambda_M(C_2) =& \setb{t_1\om_1+t_2\om_2}{t_0,t_1,t_2\in \Z^{\geq 0},\, t_0+t_1+2t_2=M}\label{C2Le}\\
 r_1\wt\Lambda_M(C_2) =& \setb{-t'_1\om_1+(t'_1+t'_2)\om_2}{t'_0,t'_1,t'_2\in \N,\, t'_0+t'_1+2t'_2=M}. \nonumber
\end{align}

The discrete orthogonality relations of $E-$functions of $C_2$ which hold for any two functions $\Xi_\la$, $\Xi_{\la'}$  labeled by $\la,\la'\in\Lambda^e_M(C_2)$ are of the form (\ref{orthoE}). The coefficients $\ep^e (x)$, $h_\lambda ^{e\vee}$, which appear in (\ref{orthoE}), have according to \S 3.5 values $\ep^e (x)=4$, $h_\la ^{e\vee}=1$ for $x\in r_1\wt F_M(C_2)$, $\la\in r_1\wt{\Lambda}_M(C_2)$. We represent each point $x\in F_M(C_2)$ and each weight $\lambda\in \Lambda_M(C_2)$ by the coordinates $[s_0,s_1,s_2]$ and $[t_0,t_1,t_2]$ from relations (\ref{C2Me}) and (\ref{C2Le}), respectively. The values of the coefficients $\ep^e (x)$, $h_\lambda ^{e\vee}$ for $x\in F_M(C_2) $, $\la\in \Lambda_M(C_2)$ are listed in Table \ref{F4}.
\begin{table}[h]
\begin{tabular}{c|c}
$x\in F_M(C_2)$ & $\ep^e (x)$  \\ \hline
$[s_0,s_1,s_2]$ & $4$  \\
$[0,s_1,s_2]$ & $4$   \\
$[s_0,0,s_2]$ & $4$   \\
$[s_0,s_1,0]$ & $4$   \\
$[0,0,s_2]$ & $1$   \\
$[0,s_1,0]$ & $2$  \\
$[s_0,0.0]$ & $1$  \\

\end{tabular}\hspace{24pt}
\begin{tabular}{c|c}
$\lambda\in \Lambda_M(C_2)$  & $h_\lambda ^{e\vee}$ \\ \hline
$[t_0,t_1,t_2]$ & $1$  \\
$[0,t_1,t_2]$ & $1$   \\
$[t_0,0,t_2]$ & $1$   \\
$[t_0,t_1,0]$ & $1$   \\
$[0,0,t_2]$ & $2$   \\
$[0,t_1,0]$ & $4$  \\
$[t_0,0,0]$ & $4$  \\
\end{tabular}
\caption{The coefficients $\ep^e (x)$ and $h_\lambda ^{e\vee}$ of $C_2$.  Assuming $s_0,s_1,s_2 >0 $, $t_0,t_1,t_2 >0 $. }\label{F4}
\end{table}
\end{example}


\subsection{Discrete $E-$transforms}\

Analogously to ordinary Fourier analysis, we define interpolating functions $ \Xi^M$
\begin{align}
\Xi^M(x):=& \sum_{\la\in \Lambda^e_M} c_\la \Xi_\la(x),\q x\in \R^n \label{intc}
\end{align}
which are given in terms of expansion functions $\Xi_\la$ and expansion coefficients $c_\la$, whose values need to be determined. These interpolating functions can also be understood as finite cut-offs of infinite expansions.

Next we discretize the equations (\ref{intc}). Suppose we have some function $f$ sampled on the grid $F^e_M$. The interpolation of $f$ consists in finding the coefficients $c_\la$ in the interpolating functions (\ref{intc}) such that
\begin{align}
\Xi^M(x)=& f(x), \q x\in F^e_M \label{intcs}
\end{align}
Relations (\ref{FLE}) and (\ref{orthoE}) allow the values $\Xi_\la(x)$ with $x\in F^e_M$, $\la\in \Lambda^e_M$ to be viewed as elements of a non-singular square matrix. This invertible matrix coincides with the matrix of the linear system (\ref{intcs}). Thus, the coefficients $c_\la$ can be uniquely determined. The formula for calculation of $c_\la$, which is also called discrete $E-$transform, can be obtained by means of calculation of standard Fourier coefficients:
\begin{align}
c_\la=& \frac{\sca{f}{\Xi_\la}_{F^e_M}}{\sca{\Xi_\la}{\Xi_\la}_{F^e_M}}=(c\abs{W^e} M^nh^{e\vee}_\la)^{-1}\sum_{x\in F^e_M}\ep^e(x) f(x)\overline{\Xi_\la(x)}\label{ctrans}
\end{align}
We also have the corresponding Plancherel formula
\begin{equation*}
\sum_{x\in F^e_M} \ep^e(x)\abs{f(x)}^2=c \abs{W^e} M^n \sum_{\la\in\Lambda^e_M}h^{e\vee}_\la\abs{c_\la}^2 .
\end{equation*}

\section{Concluding Remarks}
The $E-$functions of the article have other undoubtedly useful properties that were not invoked here. Let us briefly mention six:
\begin{itemize}

\smallskip
\item
Product of two $E-$functions with the same underlying Lie group and the same arguments $x\in\R^n$ but different subscripts, say $\lambda$ and $\lambda'$, decompose into the sum of $E-$functions. This is a powerful property that enables building of a set of recursion relations for constructing even larger $E-$functions.

\smallskip\item
Points of the weight lattice $P$ of $G$ split into a few disjoint congruence classes. It is convenient to specify a congruence class of $\mu\in P$ by its congruence number \cite{LP}.  There is a different linear function of $\mu$ for different Lie group $G$. The weights of one $W$-orbit, hence also the weights of one $W^e-$orbit, belong to the same congruence class. As a consequence, all summands $e^{2\pi\ii\langle\mu,x\rangle}$ in an $E-$function have the weights $\mu$ from the same congruence class. Hence, the $E-$function has a well defined congruence number. During multiplication of $E-$functions of the same $G$, the congruence numbers of the $E-$functions add up. In particular, the decomposition of a product of two $E-$functions must contain summands with the same congruence number. The number of congruence classes of a $G$ is equal to the order of the center of the compact simple Lie group $G$.

\smallskip\item
Another undoubtedly useful property of orbit functions, which plays no role in this paper, is the fact that they are eigenfunctions of the Laplace operator appropriate for the Lie group, and that the eigenvalues are known in all cases \cite{KP3}.

\smallskip\item
The $E-$functions considered in the paper so far have the underlying simple group $G$. Suppose now that it is semisimple but not simple, say $G=G_1\times G_2$, where $G_1$ and $G_2$ are simple. Then there are two options as to what to take for the even subgroup $W^e(G_1\times G_2)$. The simpler of the two is to define $W^e(G_1\times G_2):=W^e(G_1)\times W^e(G_2)$. The $E-$functions of $W^e(G_1)\times W^e(G_2)$ are then products of the $E-$functions of $G_1$ and $G_2$, etc., see  \cite{KP3}. Such an option is trivial, for example when $G=SU(2)\times SU(2)$.

The option where $W^e(G_1\times G_2)$ is bigger, namely the full even subgroup of $W(G_1)\times W(G_2)$, is somewhat more interesting. This options is still to be explored in the literature \cite{HKP}. It is already non-trivial in the lowest case $G=SU(2)\times SU(2)$.

\smallskip\item
The results presented in this paper are valid for any compact simple Lie group $G$, including the five exceptional cases. Explicit counting formulas for $|F^e_M|$ for these five cases, which were not provided here, can be straightforwardly put together using Proposition \ref{numE} and the Appendix in \cite{HP}.

\smallskip\item
The present work raises the question: under which conditions converge series of the functions $\{\Xi^M\}_{M=1}^\infty$ assigned to a function $f:F^e\map\Com$ by the relations (\ref{intc}), (\ref{ctrans}).

\end{itemize}

\section*{Acknowledgments}
Work was supported  by the Natural Sciences and Engineering Research Council of Canada and in part also by the MIND Research Institute, Santa Ana, California. JH~is grateful for the postdoctoral fellowship and for the hospitality extended to him at the Centre de recherches math\'ematiques, Universit\'e de Montr\'eal.


\begin{thebibliography}{99}

\bibitem{HP}
J. Hrivn\'ak, J. Patera, {\it On discretization of tori of compact simple Lie groups,}
J. Phys. A: Math. Theor.~{\bf 42} (2009) 385208.

\bibitem{P}
J. Patera, {\it Compact simple Lie groups and theirs $C-$, $S-$, and
$E-$transforms,\/} SIGMA (Symmetry, Integrability and Geometry: Methods and
Applications) {\bf 1} (2005) 025, 6 pages, math-ph/0512029.

\bibitem{KP1}
A. U. Klimyk, J. Patera, {\it Orbit functions,\/}  SIGMA (Symmetry,
Integrability and Geometry: Methods and Applications) {\bf 2} (2006), 006, 60
pages, math-ph/0601037

\bibitem{KP2}
A. U. Klimyk, J. Patera, {\it Antisymmetric orbit functions,\/}  SIGMA (Symmetry, Integrability and Geometry: Methods and Applications) {\bf 3} (2007), paper 023, 83 pages;  math-ph/0702040v1

\bibitem{KP3}
A. U. Klimyk, J. Patera, {\it $E-$orbit functions,\/} SIGMA (Symmetry, Integrability and Geometry: Methods and Applications) {\bf 4} (2008), 002, 57 pages; arXiv:0801.0822

\bibitem{Bour}
N.~Bourbaki,
{\sl Groupes et alg\`ebres de Lie,\/} Chapters IV, V, VI,
Hermann, Paris 1968.

\bibitem{H1}
J. E.~Humphreys, {\sl Introduction to Lie Algebras and
Representation Theory,\/} Springer, New York, 1972.

\bibitem{VO}
E. B.~Vinberg, A. L.~Onishchik, {\sl Lie groups and Lie algebras,\/}
Springer, New York, 1994.

\bibitem{H2}
J.~E.~Humphreys, {\it Reflection groups and Coxeter groups}, Cambridge Studies in Advanced Mathematics, {\bf 29} (1990) Cambridge University Press, Cambridge.

\bibitem{BB}
A. Bjorner, F. Brenti,
{\it Combinatorics of Coxeter groups,}
Graduate Texts in Mathematics, {\bf 231} (2005) Springer, New York.

\bibitem{MP2}
R. V. Moody and J.~Patera, {\it Orthogonality within the families of \ $C-$,
$S-$, and $E-$functions of any compact semisimple Lie group,\/} SIGMA
(Symmetry, Integrability and Geometry: Methods and Applications) {\bf 2} (2006) 076, 14 pages, math-ph/0611020.

\bibitem{LP}
F.~W.~Lemire, J.~Patera, {\it Congruence number, a generalization of
$SU(3)$ triality,\/}  J.~Math.~Phys., {\bf 21}  (1980) 2026-2027

\bibitem{HKP}
J. Hrivn\'ak, I. Kashuba J. Patera, {\it On E-functions of semisimple Lie groups}, (in preparation)

\end{thebibliography}
\end{document}